\newcommand{\ceil}[1]{\left\lceil #1 \right\rceil}
\newcommand{\donotshow}[1]{}
\newcommand{\ignore}[1]{}
\newcommand{\CC}{C\raisebox{.08ex}{\hbox{\tt ++}}}
\newcommand{\N}{\mathbb{N}}
\newlength{\setspacing}
\newcommand{\mbegin}{\{\ \ }
\newcommand{\mend}{\}}
\newlength{\mleftindent}
\newlength{\mindent}
\newlength{\mboxwidth}
\newcommand{\mincrement}{\addtolength{\mboxwidth}{-\mindent}}
\newcommand{\mdecrement}{\addtolength{\mboxwidth}{\mindent}}
\newlength{\preprogramskip}
\newlength{\postprogramskip}
\newlength{\mexpwidth}
\newlength{\mexpindent}
\newcommand{\indentafterkeyword}{\hspace*{0.5em}}
\newcommand{\mslifelse}[3]  
{\setlength{\mexpwidth}{\mboxwidth}%
\settowidth{\mexpindent}{{\bf if\indentafterkeyword}}%
\addtolength{\mexpwidth}{-\mexpindent}%
{\bf if\indentafterkeyword}\parbox[t]{\mexpwidth}{#1}\\
\mincrement \mbegin \parbox[t]{\mboxwidth}{#2 \mend} \mdecrement \\
{\bf else} \\
\mincrement \mbegin \parbox[t]{\mboxwidth}{#3}\\
\mend \mdecrement
}
\newlength{\proofpostskipamount}\newlength{\proofpreskipamount}
\newenvironment{proof}%
               {\par\vspace{\proofpreskipamount}\noindent{\bf Proof:}\hspace{0.5em}}
               {\nopagebreak%
                \strut\nopagebreak%
                \hspace{\fill}\qed\par\vspace{\proofpostskipamount}\noindent}
\par\vspace{0.5ex}\noindent{\bf Proof #1:}\hspace{0.5em}}%
\newtheorem{theorem}{Theorem}
\providecommand{\qed}{\rule[-0.2ex]{0.3em}{1.4ex}}
\newlength{\mydefwidth}
\newlength{\mytextwidth}
\newcommand{\myurl}[1]{{\footnotesize \url{#1}}}
\DeclareUrlCommand\email{\urlstyle{rm}%
}
\newcommand{\barM}{\overline{M}}
\newcommand{\tildeM}{\widetilde{M}}
\newcommand{\tildeB}{\widetilde{B}}
\newcommand{\Pat}[1]{\textcolor{blue}{#1}}
\newcommand{\Pattwo}[2]{{\tiny \textcolor{green}{#1}}\textcolor{blue}{#2}}
\begin{document}

\title{Cache-Oblivious VAT-Algorithms}
\author{Tomasz Jurkiewicz\footnote{Google Z\"{u}rich, 
\protect\email{tomasz.tojot.jurkiewicz@gmail.com}
} \and Kurt Mehlhorn\thanks{MPI for Informatics, 
\protect\email{mehlhorn@mpi-inf.mpg.de}
} \and Patrick Nicholson\thanks{MPI for Informatics, \protect\email{pnichols@mpi-inf.mpg.de}
}}

\maketitle

\begin{abstract} The VAT-model (virtual address translation model) extends the EM-model (external memory model) and takes the cost of address translation in virtual memories into account. In this model, the cost of a single memory access may be logarithmic in the largest address used. We show that the VAT-cost of cache-oblivious algorithms is only a constant factor larger than their EM-cost; this requires a somewhat more stringent tall cache assumption than for the EM-model.  \end{abstract}

\section{Introduction} Modern processors have a memory hierarchy and use virtual memory. We concentrate on two-levels of the hierarchy and refer to the faster memory as the cache. Data is moved between the fast and the slow memory in blocks of contiguous memory cells, and only data residing in the fast memory can be accessed directly. Whenever data in the slow memory is accessed, a cache fault occurs and the block containing the data must be moved to the fast memory. In the EM-model of computation, the complexity of an algorithm is defined as the number of cache faults. 

In general, many processes are running concurrently on the same machine. Each running process has its own linear address space $0$, $1$, $2$, \ldots, and the operating system ensures that the distinct linear address spaces of the distinct processes are mapped injectively to the physical address space of the processor. To this effect, the operating system maintains a translation tree for each process. The translation from virtual addresses to physical addresses is implemented as a tree walk and incurs cost; see Section~\ref{sec: VAT-model} for details. The depth of the tree for a particular process is $d = \log_K (m/P)$ where $m$ is the maximum address used by the process, $K$ is the arity of the tree and $P$ is the page size. Typically, $K \approx P \approx 2^{10}$, and $K$ is chosen such that the space requirement of a translation tree node is equal to the size of a page. The translation process accesses $d$ pages and only pages residing in fast memory can be accessed directly. Any node visited during the translation process must be brought into fast memory if not already there, and hence a single memory access may cause up to $d$ cache faults. 

The cost of the translation process is clearly noticeable in some cases. Jurkiewicz and Mehlhorn~\cite{Cost-of-Address-Translation} timed some simple programs and observed that for some of them the quotient
\[        \text{measured running time for input size $n$}/\text{RAM-running time for input size $n$} \]
seems to grow logarithmically in $n$\Pattwo{,}{;} see Figure~\ref{fig: experiments}. Jurkiewicz and Mehlhorn introduced the VAT-model as an extension of the EM-model to account for the cost of address translation; see Section~\ref{sec: VAT-model} for a definition of their model. They showed that the growth rates of the  measured running times of the programs mentioned in Figure~\ref{fig: experiments} are correctly predicted by the model.

\begin{figure}[t]
  	\begin{center}
	\begin{tabular}{cc}
	\adjustbox{valign=m}{\begin{sideways}\mbox{running time/RAM complexity}\end{sideways}}&
	\adjustbox{valign=m}{\includegraphics[width=0.9\textwidth]{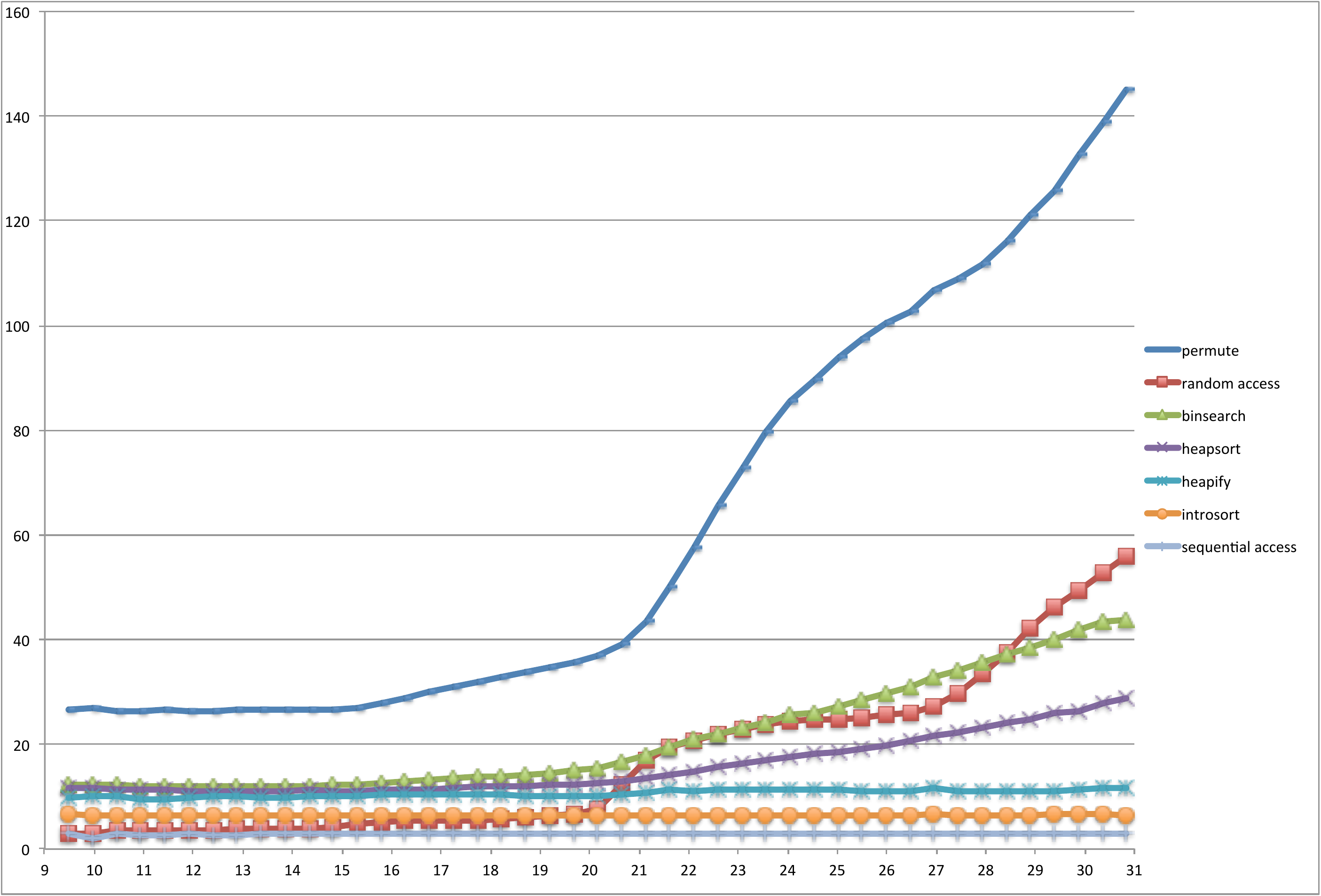}}\\
	&$\log(\text{input size})$
	\end{tabular}
  	\end{center}
\caption{
The abscissa shows the logarithm of the input size.
The ordinate shows the measured running time divided by the RAM-complexity (normalized operation time). RAM complexity of the programs shown is either $cn$ or $cn \log n$; \cite{Cost-of-Address-Translation} ignores lower order terms. The constants $c$ were chosen such that the different plots fit nicely into the same figure. 
The normalized operation times of sequential access, quicksort (introsort), and heapify are
constant, the normalized operation times of permute, random scan,  repeated binary search, heapsort grow as functions of the problem size. Note that a straight-line corresponds to a linear function in $\log(\text{input size})$. 
}\label{fig: experiments}
\end{figure}

The EM-model penalizes non-locality and the VAT-model penalizes it even more. Cache-oblivious algorithms~\cite{FrigoEtAl12} show good data locality for all memory sizes. Jurkiewicz and Mehlhorn~\cite{Cost-of-Address-Translation} showed that some cache-oblivious algorithms, namely those that do no need a tall cache assumption, also perform well in the VAT-model. Their paper poses the question of  whether a similar statement can be made for the much larger class of cache-oblivious algorithms that require a tall cache assumption. We answer their question in the affirmative. 

Our main result is as follows. Consider a cache-oblivious algorithm that incurs $C(\tildeM,\tildeB,n)$ cache faults, when run on a machine with cache size $\tildeM$ and block size $\tildeB$, provided that $\tildeM \ge g(\tildeB)$. Here $g: \N \mapsto \N$ is a function that captures the ``tallness'' requirement on the cache\Pat{~\cite{FrigoEtAl12}}. We consider the execution of the algorithm on a VAT-machine with cache size $\barM$ and page size $P$ and show that the number of cache faults is bounded by $4d C(M/4,dB,n)$ provided that $M \ge 4g(dB)$. Here $M = \barM/a$, $B = P/a$ and $a \ge 1$ is the size (in addressable units) of the items handled by the algorithm. 

Funnel sort~\cite{FrigoEtAl12} is an optimal cache-oblivious sorting algorithm. On an EM-machine with cache size $\tildeM$ and block size $\tildeB$, it  sorts $n$ items with 
 \[        C(\tildeM,\tildeB,n) = O\left(\frac{n}{\tildeB} \ceil{\frac{\log n/\tildeM}{\log \tildeM/\tildeB}}\right)  \]
cache faults provided that $\tildeM \ge \tildeB^2$: thus $g$ is quadratic\footnote{This constraint can be reduced to $\tildeM \ge \tildeB^{1+\varepsilon}$ for any constant $\varepsilon > 0$~\cite{BrodalFagerbergFunnelsort}, however we do not wish to introduce additional notation.} for this algorithm. As a consequence of our main theorem, we obtain:

\begin{theorem} Funnel sort sorts $n$ items, each of size $a \ge 1$ addressable units, on a VAT-machine with cache size $\barM$ and page size $P$, with at most 
\[  O\left(\frac{4n}{B} \ceil{\frac{\log 4n/M}{\log M/(4dB)}}\right)\]
cache faults, where $M = \barM/a$ and $B = P/a$. This assumes $(B \log_K(2n/P))^2 \le M/4$. \end{theorem}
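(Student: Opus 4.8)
The statement is a corollary of the main theorem of the paper, so the plan is to instantiate that theorem with the funnel-sort parameters and then simplify. The main theorem bounds the number of VAT cache faults by $4d\,C(M/4,dB,n)$ whenever $M \ge 4g(dB)$, where $g$ is the tallness function of the algorithm. For funnel sort the EM-hypothesis is $\tildeM \ge \tildeB^2$, so $g$ is quadratic, $g(x)=x^2$. Thus the first step is simply to read off the parameter identification: the EM-machine to which the main theorem refers has cache size $\tildeM = M/4$ and block size $\tildeB = dB$, and its tallness function is $g(x)=x^2$.

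The second step is the (routine) calculation. Substituting $\tildeM = M/4$ and $\tildeB = dB$ into funnel sort's cache complexity gives
\[
   C(M/4,dB,n) = O\!\left(\frac{n}{dB}\,\ceil{\frac{\log (4n/M)}{\log (M/(4dB))}}\right),
\]
using $\log(n/\tildeM)=\log(4n/M)$ and $\log(\tildeM/\tildeB)=\log(M/(4dB))$. Multiplying by the factor $4d$ from the main theorem, the $d$ in the denominator cancels exactly, so that
\[
   4d\,C(M/4,dB,n) = O\!\left(\frac{4n}{B}\,\ceil{\frac{\log (4n/M)}{\log (M/(4dB))}}\right),
\]
which is precisely the claimed bound. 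I expect this step to carry no difficulty.

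The remaining step, and the one requiring care, is the verification of the precondition. The main theorem applies only when $M \ge 4g(dB) = 4(dB)^2$; equivalently, this is exactly the funnel-sort tall-cache hypothesis $\tildeM \ge \tildeB^2$ for the simulated EM-machine, since $\tildeM = M/4$ and $\tildeB = dB$. Both conditions coincide and reduce to $(dB)^2 \le M/4$. To turn this into the hypothesis stated in the theorem I must bound the translation-tree depth $d$: since funnel sort runs in $O(n)$ space (input together with a linear amount of scratch), the largest address it touches lies within an address space of at most roughly $2n$ items, so that the depth of its translation tree satisfies $d \le \log_K(2n/P)$ once the item size $a$ has been accounted for in the conversion between $\barM,P$ and $M,B$. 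Substituting this bound on $d$ into $(dB)^2 \le M/4$ yields the stated assumption $(B\log_K(2n/P))^2 \le M/4$. The main obstacle is this depth estimate: one must argue that the working space of funnel sort is genuinely bounded by a constant times $n$, and must track the factors of $a$ in $M=\barM/a$ and $B=P/a$ carefully so that $\log_K(2n/P)$ comes out as a valid upper bound for $d$ (making the hypothesis sufficient rather than merely necessary), after which the theorem follows immediately.
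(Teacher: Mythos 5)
Your proposal is correct and follows exactly the route the paper intends: the paper states this theorem as an immediate corollary of Theorem~\ref{thm: cache-oblivious VAT}, instantiating $\tildeM = M/4$, $\tildeB = dB$, $g(x) = x^2$, with the factor $d$ cancelling against the $4d$ multiplier, and with $d \le \log_K(2n/P)$ reflecting funnel sort's linear working space (the same factor of $2$ appears in the paper's parallel mergesort calculation). If anything, your explicit verification that the precondition $M \ge 4(dB)^2$ coincides with the simulated machine's tall-cache hypothesis, and your flagged care about the factors of $a$ in the depth bound, are spelled out more fully than in the paper, which leaves the funnel-sort substitution implicit.
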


Since $M/(4dB) \ge (M/B)^{1/2}$ for realistic values of $M$, $B$, $K$, and $n$, this implies funnel-sort is essentially optimal also in the VAT-model. 

\section{The EM-Model}\label{sec: EM-model}

An EM-machine has a fast memory of size $\tildeM$ and a data is moved between fast and slow memory in blocks of size $\tildeB$. Algorithms for EM-machines may use $\tildeM$ and $\tildeB$ in the program code; algorithms are 
not written for specific values of $\tildeM$ and $\tildeB$, but work for any values of $\tildeM$ and $\tildeB$ satisfying certain constraints, e.g., that the fast memory can hold a certain number of blocks. We capture these constraints by a function $g: \N \mapsto \N$ and the requirement $\tildeM \ge g(\tildeB)$. 

Cache-oblivious algorithms are algorithms that do not refer to the parameters $\tildeM$ and $\tildeB$ in the code. Only the analysis is done in terms of $\tildeM$ and $\tildeB$. Frequently, the analysis only holds for $\tildeM$ and $\tildeB$ satisfying certain constraints; e.g., that the cache is tall and satisfies $\tildeM \ge \tildeB^2$. Again, this can be captured by an appropriate function $g$. 

It is customary in the EM-literature that the size of the fast memory and the size of a block are expressed in terms of
number of items handled by the algorithm. For example, for a sorting algorithm the items are the objects to be sorted. 

\section{The VAT-Model~\cite[Section XXX]{Cost-of-Address-Translation}}\label{sec: VAT-model}

VAT-machines are EM-machines that use virtual addresses. We concentrate on the virtual memory of a single program. Both real (physical) and virtual addresses are strings in $\{0, K-1\}^d\{0,P-1\}$. Any such string corresponds to a number in the interval $[0,K^d P - 1]$ in a natural way.
The $\{0,K-1\}^d$ part of the address is called an \emph{index}, and its length $d$ is an execution parameter fixed prior to the execution. We assume $d=\lceil\log_K(\text{last used address}/P)\rceil$.
The $\{0,P-1\}$ part of the address is called page offset and $P$ is the page size. A page contains $P$ addressable units, usually bytes.\footnote{In actual systems $K$ is chosen such that a node fits exactly into a page. For example, for the 64-bit addressing mode of the processors of the AMD64 family (see http://en.wikipedia.org/wiki/X86-64), the addressable units are bytes and $P = 2^{12}$. Since an address consists of $2^3$ bytes, $K = 2^9$.}
The translation process is a tree walk. We have a $K$-ary tree $T$ of height $d$.
The nodes of the tree are pairs $(\ell,i)$ with $\ell \ge 0$ and $i \ge 0$.
We refer to $\ell$ as the layer of the node and to $i$ as the number of the node.
The leaves of the tree are on layer zero and a node $(\ell, i)$ on layer $\ell \ge 1$ has $K$ children on layer $\ell-1$, namely the nodes $(\ell-1,Ki+a)$, for $a=0\ldots K-1$.
In particular, node $(d,0)$, the root, has children $(d-1,0),\ \ldots,\ (d-1,K-1)$.
The leaves of the tree correspond to physical pages of the main memory of a RAM machine. In order to translate a virtual address $x_{d-1}\ldots x_0 y$, we start in the root of $T$, and then follow the path described by ${x_{d-1}}\ldots{x_0}$. We refer to this path as the \emph{translation path} for the address. 
The path ends in the leaf $(0,\sum_{0 \leqslant i \leqslant d-1} x_i K^i)$. Then the offset $y$ selects the $y$-th cell in this page. 

\begin{figure}[t]
\begin{center}
\includegraphics[width=0.9\textwidth]{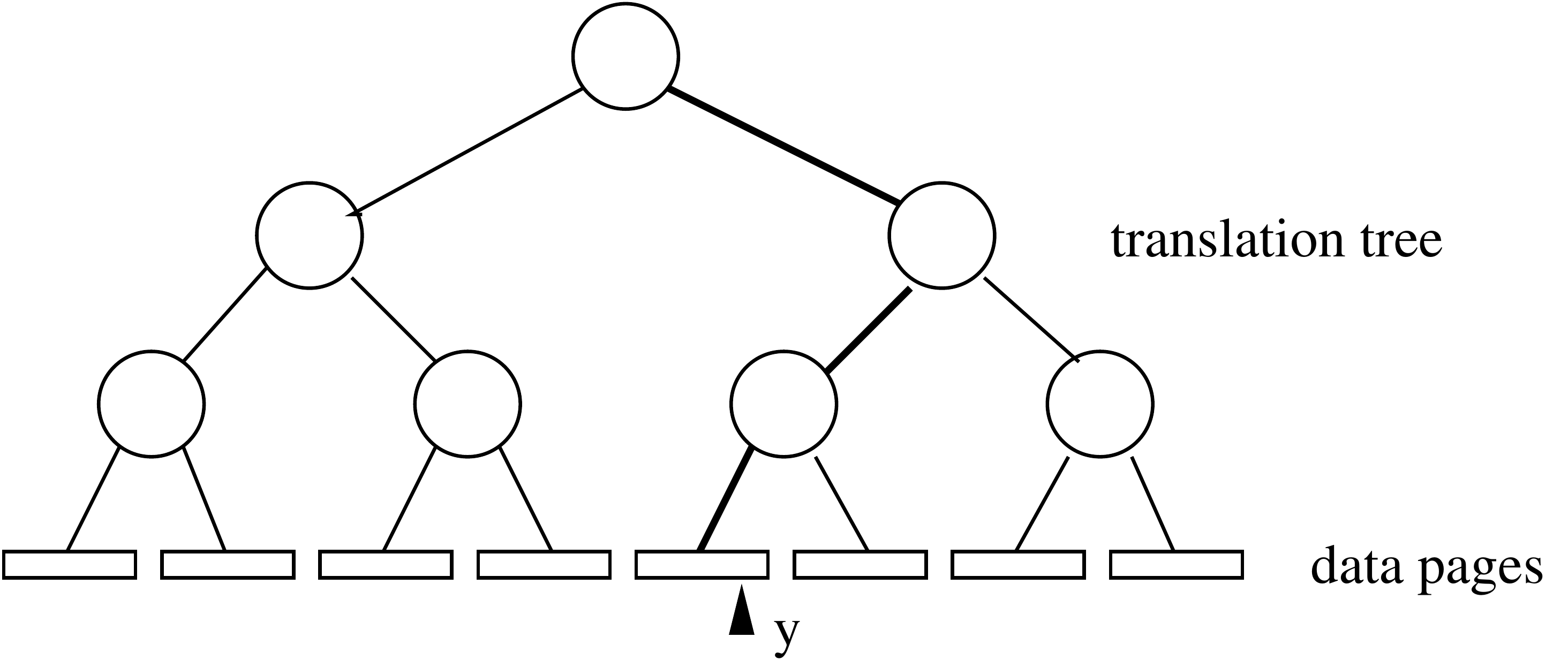}
\end{center}
\caption{\label{fig:translationtree} The pages holding the data are shown at the bottom and the translation tree is shown above the data pages. The translation tree has fan-out $K$ and depth $d$; here $K = 2$ and $d = 3$. The translation path for the virtual index 100 is shown. The offset $y$ selects a cell in the physical page with virtual index 100. The nodes of the translation tree and the data pages are stored in memory. Only nodes and data pages in fast memory (cache memory) can be accessed directly, nodes and data pages currently in slow memory have to be brought into fast memory before they can accessed. Each such move is a cache fault. In the EM-model only cache faults for data pages are counted, in the VAT-model, we count cache faults for nodes of the translation tree and for data pages. }
\end{figure}

A VAT-machine has a fast memory (cache) of size $\barM$ and data is moved between fast and slow memory in units of $P$ cells. We assume that a node of the translation tree fits into a page. Only nodes and data pages in the cache can be accessed directly, nodes and data pages currently in slow memory have to be brought into fast memory before they can be accessed. 
More precisely, let $a$ be a virtual address, and let $v_d,v_{d-1},\ldots,v_0$ be its translation path. Here, $v_d$ is the root of the translation tree, $v_d$ to $v_1$ are internal nodes of the translation tree, and $v_0$ is a data page. We assume that the root node is always in cache. Translating $a$ requires accessing all nodes of the translation path in order. Only nodes in the cache can be accessed. If a node needs to be accessed, but is not in the cache, it needs to be added to the cache, and some other page has to be evicted. The translation of $a$ ends when $v_0$ is accessed.
The cost of the memory access is the number of page faults incurred during the translation process. 

EM- and VAT-machines move cells in contiguous blocks. If the items handled by an EM-machine comprise $a \ge 1$ addressable units, we have $M = \barM/a$ and $B = P/a$. In the EM cost model only cache faults for data pages are charged and in the VAT cost model all cache faults arising in a memory access are charged.

\section{EM-Algorithms as VAT-Algorithms}

In the worst case, a memory access causes one cache fault in the EM-model and $d$ cache faults in the VAT-model. In order to amortize the cost of a cache fault in the EM-model, it suffices to access a significant fraction of the memory cells in each page brought to fast memory. Therefore, an EM-algorithm that is efficient for block size $d P/a$ should also be efficient in the VAT-model. The following discussion captures this intuition. 

For an EM-algorithm, let $C(\tildeM,\tildeB,n)$ be the number of IO-operations on an input of size $n$, where $\tildeM$ is the size of the faster memory (also called cache memory), $\tildeB$ is the block size, and $\tildeM \ge g(\tildeB)$.  The following theorem shows that any EM-algorithm that is aware of $\tildeM$ and $\tildeB$ implies the existence of an VAT-algorithm that is aware of $\barM$ and $P$.

\begin{theorem}\label{thm: EM-to-VAT}
Let $g:\N \mapsto \N$. Consider an EM-algorithm with IO-complexity $C(\tildeM,\tildeB, n)$, where $\tildeM$ is the size of the cache, $\tildeB$ is the size of a block, and $n$ is the input size, provided that $\tildeM \ge g(\tildeB)$.  Let $d = \log_K(n/P)$ and assume $g(dP/a) \le \barM/(4a)$, where $a \ge 1$ is the item size in number of addressable units. The program can be made to run on a VAT-machine with cache size $\barM$ and page size $P$ with at most $4 d C( \barM/(4a),      d P/a  ,n)$ cache faults. With the notation $M = \barM/a$ and $B = P/a$, the upper bound can be stated as $4d C(M/4, d B, n)$ and the tallness requirement becomes $M \ge 4 g(dB)$. 
\end{theorem}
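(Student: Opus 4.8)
The plan is to run the EM-algorithm on the VAT-machine essentially unchanged, instantiated with cache parameter $\tildeM = M/4$ items and block parameter $\tildeB = dB$ items (equivalently $\barM/(4a)$ and $dP/a$ addressable units), and to charge the resulting VAT cache faults against the EM block transfers counted by $C(M/4,dB,n)$. First I would fix the memory layout so that the two models line up: partition the algorithm's virtual memory into EM-blocks of $\tildeB = dB$ items and place each block so that it occupies $d$ contiguous, page-aligned data pages. Since one page holds $B = P/a$ items, a block of $dB$ items is exactly $d$ pages, and page-aligning the block starts guarantees that each EM-block corresponds to $d$ consecutive leaves of the translation tree. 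The hypothesis $g(dP/a) \le \barM/(4a)$ is precisely $\tildeM \ge g(\tildeB)$, so the EM-algorithm is valid at these parameters and $C(M/4,dB,n)$ is a legitimate bound on its block transfers.

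Next I would specify the caching policy on the VAT-machine and the invariant it maintains. At every point in the simulation the VAT-cache holds exactly the $d$ data pages of each EM-block currently resident in the simulated EM-cache, together with every internal translation-tree node lying on a translation path of one of those data pages; the root is resident for free. When the EM-algorithm loads a block I bring in its $d$ data pages and the missing internal nodes; when it evicts a block I discard its data pages and any internal node no longer required by a resident block (evictions cost nothing). The point of the invariant is that any memory access the EM-algorithm makes to data already in its cache touches only a data page and translation nodes that are all resident, and hence causes zero VAT cache faults; all VAT faults are therefore attributable to block transfers.

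I would then bound the fault cost of a single block transfer by $4d$. Loading the $d$ data pages costs $d$ faults. For the internal nodes, the key observation is that the block's $d$ leaves are contiguous and $d = \log_K(n/P) \le K$ for any realistic range of $n$, so at every internal level $\ell \ge 1$ the interval of $d$ consecutive leaves has at most $\lceil d/K^\ell\rceil + 1 \le 2$ distinct ancestors; summing over the $d-1$ internal levels gives at most $2(d-1)$ internal nodes on the block's translation paths, hence at most $2d-2$ further faults. The total is at most $3d-2 < 4d$ faults per transfer, so the number of VAT cache faults is at most $4d\,C(M/4,dB,n)$, as claimed.

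Finally I would check that the VAT-cache is large enough to support the invariant, which is where the factor $1/4$ is spent. The EM-cache holds $\tildeM/\tildeB = M/(4dB)$ blocks; their data pages occupy $M/(4dB)\cdot d = M/(4B)$ pages, i.e.\ a quarter of the $M/B = \barM/P$ pages of the VAT-cache, and by the per-block count above their translation nodes occupy at most $M/(4dB)\cdot 2d = M/(2B)$ pages, another half. The resident data and translation nodes thus fill at most three quarters of the cache, leaving a quarter free for the data page and translation path of the access in progress. The main obstacle, and the delicate part of the argument, is exactly this simultaneous closing of two budgets against the single slack parameter $1/4$: the contiguity-and-$d\le K$ bound must be tight enough that translation nodes cost at most $2d$ per block both in the per-transfer charge and in the space count, while the layout and eviction rule must guarantee that a cached block's translation nodes really do stay resident so that repeated accesses remain free.
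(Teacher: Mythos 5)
Your proposal is, in substance, the paper's own proof: the same instantiation $\tildeM = \barM/(4a)$, $\tildeB = dP/a$, the same one-quarter/three-quarters split of the VAT-cache between data pages and translation-tree nodes, the same identification of an EM-block with $d$ consecutive page-aligned leaves, and the same charging of all VAT faults to block transfers. However, there is one genuine deviation that leaves a gap: your per-level bound of at most $\lceil d/K^{\ell} \rceil + 1 \le 2$ ancestors rests on the extra assumption $d \le K$, which you justify only by appeal to ``realistic'' input sizes. That assumption appears nowhere in the theorem's hypotheses, so as written your argument proves only a weakened statement; for $d > K$ the level-$1$ ancestors of a block already number up to $\lceil d/K \rceil + 1 > 2$, and both your $2(d-1)$ per-transfer node count and your space budget of $M/(2B)$ pages for translation nodes fail.

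The fix lies inside your own formula and is exactly what the paper does: sum $\lceil d/K^{\ell} \rceil + 1 \le d/K^{\ell} + 2$ over the at most $d$ internal levels to get at most $2d + \sum_{\ell \ge 1} d/K^{\ell} \le 2d + d/(K-1) \le 3d$ internal nodes per block unconditionally (using only $K \ge 2$). Each block transfer then costs at most $d + 3d = 4d$ faults, matching the claimed bound, and the space count becomes $\barM/4$ for data plus $3\barM/4$ for translation nodes, which exactly fills $\barM$. Note that this leaves no slack, but none is needed: your ``quarter free for the access in progress'' is superfluous, since by your own invariant every access the EM-algorithm makes to cached data translates along a fully resident path and incurs zero faults, so no spare capacity has to be reserved. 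With the $d \le K$ assumption removed in this way, your argument coincides with the paper's proof.
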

\begin{proof} We run the EM-algorithm with a cache size of $\tildeM = \barM/(4a)$ and a block size $\tildeB = dP/a$ and show how to execute it efficiently in a VAT-machine with page size $P$ and cache size $\barM$. 
We use one-fourth of the cache for data and three-fourth for nodes of the translation tree. Since 
\[    \tildeM =  \barM/(4a) \ge g(dP/a) = g(\tildeB), \]
the number of cache faults incurred by the EM-algorithm is at most $C(\tildeM,\tildeB,n)$. 

Whenever, the EM-algorithm moves a block containing $\tildeB$ items) to its data cache, the VAT-machine moves the corresponding $d$ pages to the data cache and also moves all internal nodes of the translation paths to these $d$ pages to then translation cache. The number of internal nodes is bounded by $2d + \sum_{i \ge 1} d/K^i \le 2d + d/(K-1) \le 3d$. Thus a translation cache of size $3\barM/4$ suffices to store the translation paths to all pages in the data cache. For every cache fault of the EM-model, the VAT-machine incurs $4d$ cache faults. The theorem follows. \end{proof}

We apply the theorem to multi-way mergesort, an optimal sorting algorithm in the EM-model. It first creates $n/\tildeM$ sorted runs of size $\tildeM$ each by sorting chucks of size $\tildeM$ in internal memory. It then performs multi-way merge sort on these runs. For the merge, it keeps one block from each input run, one block of the output run, and a heap containing the first elements of each run in fast memory. If $\tildeM \ge 5 \tildeB$, we can merge two sequences since the space for the heap is certainly no more that the space for the input runs. The scheme results in a merge factor of $\Theta(\tildeM/\tildeB)$. We assume for simplicity that the factor is exactly $\tildeM/\tildeB$. Thus 
\[        C(\tildeM,\tildeB,n) = \frac{n}{\tildeB} \left( 1 + \ceil{\frac{\log n/\tildeM}{\log \tildeM/\tildeB}}\right).  \]

By Theorem~\ref{thm: EM-to-VAT} and with a cache size of $\barM$ and page size $P$, the number of cache faults in the VAT-model is at most 
\[   4 d C( \barM/(4a),      d P/a  ,n) = \frac{4n}{P/a} \left(1 + \ceil{\frac{\log 4an/\barM}{\log \barM/(4dP)}}\right) =  \frac{4n}{B} \left(1 + \ceil{\frac{\log 4n/M}{\log M/(4dB)}}\right) \]
Here $a$ is the item size in number of addressable units, $M = \barM/a$ and $B = P/a$. 
This assumes $5dP \le \barM/4$. If $M/(4dB) \ge (M/B)^{1/2}$, the asymptotic number of cache faults is the same in both models. For realistic values of $M$, $B$, and $n$, this will be the case. 

\ignore{\begin{theorem} In the VAT-model with cache size $M$ and page size $P$, $n$ items can be sorted with at most
\[  \frac{4n}{P} \left(1 + \ceil{\frac{\log 4n/M}{\log M/(4dP)}}\right)\]
cache faults. This assumes $5 P \log_K(2n/P) \le M/4$. \end{theorem}

Thus sorting in the VAT-model has essentially the same cost as sorting in the EM-model. }

\section{Cache-Oblivious VAT-Algorithms}

We next extend the theorem to cache-oblivious algorithms. Unlike the previous theorem, the following theorem indicates that \emph{any} algorithm, regardless of whether it is aware of the cache and block sizes, implies the existence of a VAT-algorithm that is similarly oblivious to $\barM$ and $P$.

\begin{theorem}\label{thm: cache-oblivious VAT} 
Let $g: \N \mapsto \N$. Let $A$ be an algorithm that incurs  $C(\tildeM,\tildeB,n)$ cache faults in the EM-model with cache size $\tildeM$ and block size $\tildeB$  on an input of size $n$,  provided that $\tildeM \ge g(\tildeB)$. Let $d = \log_K(n/P)$ and assume $g(dP/a) \le \barM/(4a)$, where $a \ge 1$ is the item size in number of addressable units. 
On a VAT-machine with cache size $\barM$ and page size $P$ and optimal use of the cache, the program incurs at most $4 d C( \barM/(4a),      d P/a  ,n)$.  With the notation $M = \barM/a$ and $B = P/a$, the upper bound can be stated as $4d C(M/4, d B, n)$ and the tallness requirement becomes $M/4 \ge g(dB)$. \end{theorem}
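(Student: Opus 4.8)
The plan is to reduce to Theorem~\ref{thm: EM-to-VAT} while accounting for the one essential difference: because $A$ is cache-oblivious, we cannot instruct it to move data in blocks of size $dB$; it merely issues a sequence of accesses to individual items. The key observation is that this access sequence, call it $\sigma$, is \emph{fixed} --- it depends only on the input, not on any cache or block parameters. I would therefore fix $\sigma$ once and for all and compare two ways of serving it.

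First I would analyse $\sigma$ on an ordinary EM-machine with block size $\tildeB = dB$ items and cache size $\tildeM = M/4$ items, under an optimal (offline) replacement policy. Since $A$ is cache-oblivious, its analysis is valid for these particular parameters, and the hypothesis $M/4 \ge g(dB)$ guarantees that the tallness constraint $\tildeM \ge g(\tildeB)$ holds. Hence serving $\sigma$ here costs at most $C(M/4, dB, n)$ block faults.

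Next I would exhibit a concrete (not necessarily optimal) VAT replacement strategy that mimics this EM policy, reusing the bookkeeping from the proof of Theorem~\ref{thm: EM-to-VAT}. Partition the VAT cache into $\barM/4$ for data pages and $3\barM/4$ for translation nodes. Because a block of $dB$ items is exactly $d$ consecutive pages (each page holds $B$ items and both partitions start at address $0$), every $dB$-item block of the EM model coincides with a group of $d$ consecutive data pages. Whenever the EM policy brings a block into its cache, the VAT strategy brings the corresponding $d$ data pages together with the internal translation nodes on their paths; whenever the EM policy evicts a block, the VAT strategy evicts the matching pages and any translation nodes thereby orphaned. This keeps the VAT data cache holding exactly the items the EM cache holds, so $\barM/4$ suffices, and --- by the same count as in Theorem~\ref{thm: EM-to-VAT}, namely at most $2d + \sum_{i \ge 1} d/K^i \le 3d$ internal nodes per block --- a translation cache of size $3\barM/4$ can hold the paths to all resident pages. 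Each EM block fault thus triggers at most $4d$ VAT page faults ($d$ data pages plus at most $3d$ nodes), so this strategy incurs at most $4d\,C(M/4, dB, n)$ cache faults.

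Finally, since the theorem hypothesizes \emph{optimal} use of the VAT cache, the optimal policy does no worse than the mimicking strategy just described, which yields the claimed bound. The main obstacle, and the only genuine departure from Theorem~\ref{thm: EM-to-VAT}, is precisely this last comparison: in the cache-aware setting the algorithm itself performs the block moves, whereas here I must route the argument through the fixed access sequence $\sigma$ and invoke offline optimality on both machines in order to transfer the EM bound to the VAT machine. Care is also needed to confirm that the per-block node count and the one-fourth/three-fourths cache split survive unchanged when the $dB$-item blocks are only conceptual groupings of pages rather than units the algorithm explicitly manipulates.
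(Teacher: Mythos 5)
Your proposal is correct and follows essentially the same route as the paper's proof: an EM execution with parameters $\tildeM = M/4$, $\tildeB = dB$ is simulated on the VAT-machine with a one-fourth/three-fourths cache split, each EM block fault costing at most $d$ data-page faults plus $3d$ translation-node faults, and the bound is transferred to the optimal VAT policy by noting it can only do better than this particular execution. Your extra care in fixing the access sequence $\sigma$ and aligning $dB$-item blocks with groups of $d$ consecutive pages merely makes explicit what the paper leaves implicit.
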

\begin{proof} We can almost literally reuse the proof of Theorem~\ref{thm: EM-to-VAT}. The optimal execution of $A$ on a VAT-machine with cache size $\barM$ and page size $P$ cannot incur more cache faults than the particular execution that we describe next. 

Consider an execution of $A$ on an EM-machine with cache size $\tildeM = \barM/(4a)$ and block size $\tildeB = dP/a$. Since 
\[    \tildeM =  \barM/(4a) \ge g(dP/a) = g(\tildeB), \]
the number of cache faults incurred by the EM-algorithm is at most $C(\tildeM,\tildeB,n)$. 
We execute the program on the VAT-machine as in the proof of Theorem~\ref{thm: EM-to-VAT}. 

We use one-fourth of the cache for data and three-fourth for nodes of the translation tree. Whenever, the EM-machine moves a block (of size $\tildeB$ items) to its data cache, the VAT-machine moves the corresponding $d$ pages to the data cache and also moves all internal nodes of the translation paths to these $d$ pages to then translation cache. The number of internal nodes is bounded by $2d + \sum_{i \ge 1} d/K^i \le 2d + d/(K-1) \le 3d$. Thus a translation cache of size $3\barM/4$ suffices to store the translation paths to all pages in the data cache. For every cache fault of the EM-machine, the VAT-machine incurs $4d$ cache faults. The theorem follows. \end{proof}

The optimal cache replacement strategy may be replaced by LRU at the cost of doubling the cache size and doubling the number of cache faults~\cite{Cost-of-Address-Translation}.

Recall that a cache-oblivious algorithm has no knowledge of the memory and the block size. It does well for any choice of $\tildeM$ and $\tildeB$ as long as $\tildeM \ge g(\tildeB)$. The theorem tells us that it also does well in the VAT-model as long as the more stringent requirement $M \ge 4a g(dB)$ is satisfied. 

\section{Conclusion}

We have shown that performance of cache-oblivious algorithms in the VAT-model matches their performance in the EM-model provided a somewhat more stringent tall cache assumption holds. 


\end{document}